\def\grd@save@target#1{%
  \def\grd@target{#1}}
\def\grd@save@start#1{%
  \def\grd@start{#1}}
\tikzset{
  grid with coordinates/.style={
    to path={%
      \pgfextra{%
        \edef\grd@@target{(\tikztotarget)}%
        \tikz@scan@one@point\grd@save@target\grd@@target\relax
        \edef\grd@@start{(\tikztostart)}%
        \tikz@scan@one@point\grd@save@start\grd@@start\relax
        \draw[minor help lines,magenta] (\tikztostart) grid (\tikztotarget);
        \draw[major help lines] (\tikztostart) grid (\tikztotarget);
        \grd@start
        \pgfmathsetmacro{\grd@xa}{\the\pgf@x/1cm}
        \pgfmathsetmacro{\grd@ya}{\the\pgf@y/1cm}
        \grd@target
        \pgfmathsetmacro{\grd@xb}{\the\pgf@x/1cm}
        \pgfmathsetmacro{\grd@yb}{\the\pgf@y/1cm}
        \pgfmathsetmacro{\grd@xc}{\grd@xa + \pgfkeysvalueof{/tikz/grid with coordinates/major step}}
        \pgfmathsetmacro{\grd@yc}{\grd@ya + \pgfkeysvalueof{/tikz/grid with coordinates/major step}}
        \foreach \x in {\grd@xa,\grd@xc,...,\grd@xb}
        \node[anchor=north] at (\x,\grd@ya) {\pgfmathprintnumber{\x}};
        \foreach \y in {\grd@ya,\grd@yc,...,\grd@yb}
        \node[anchor=east] at (\grd@xa,\y) {\pgfmathprintnumber{\y}};
      }
    }
  },
  minor help lines/.style={
    help lines,
    step=\pgfkeysvalueof{/tikz/grid with coordinates/minor step}
  },
  major help lines/.style={
    help lines,
    line width=\pgfkeysvalueof{/tikz/grid with coordinates/major line width},
    step=\pgfkeysvalueof{/tikz/grid with coordinates/major step}
  },
  grid with coordinates/.cd,
  minor step/.initial=.2,
  major step/.initial=1,
  major line width/.initial=2pt,
}
\newtheorem{definition}{Definition}
\newtheorem{lemma}[definition]{Lemma}
\newtheorem{theorem}[definition]{Theorem}
\def\squareforqed{\hbox{\rlap{$\sqcap$}$\sqcup$}}
\def\qed{\ifmmode\squareforqed\else{\unskip\nobreak\hfil
\penalty50\hskip1em\null\nobreak\hfil\squareforqed
\parfillskip=0pt\finalhyphendemerits=0\endgraf}\fi}
\def\endenv{\ifmmode\;\else{\unskip\nobreak\hfil
\penalty50\hskip1em\null\nobreak\hfil\;
\parfillskip=0pt\finalhyphendemerits=0\endgraf}\fi}
\newenvironment{proof}{\noindent \textbf{{Proof~}}}{\hfill $\blacksquare$}
\newenvironment{remark}{\noindent \textbf{{Remark~}}}{}
\mathchardef\ordinarycolon\mathcode`\:
\def\vcentcolon{\mathrel{\mathop\ordinarycolon}}
\def\resetMathstrut@{%
	\setbox\z@\hbox{%
		\mathchardef\@tempa\mathcode`\[\relax
		\def\@tempb##1"##2##3{\the\textfont"##3\char"}%
		\expandafter\@tempb\meaning\@tempa \relax
	}%
	\ht\Mathstrutbox@\ht\z@ \dp\Mathstrutbox@\dp\z@}
\newcommand{\nc}{\newcommand}
\nc{\rnc}{\renewcommand}
\nc{\lbar}[1]{\overline{#1}}
\nc{\bra}[1]{\langle#1|}
\nc{\ket}[1]{|#1\rangle}
\nc{\ketbra}[2]{|#1\rangle\!\langle#2|}
\nc{\braket}[2]{\langle#1|#2\rangle}
\nc{\proj}[1]{| #1\rangle\!\langle #1 |}
\nc{\avg}[1]{\langle#1\rangle}
\nc{\Rank}{\operatorname{Rank}}
\nc{\smfrac}[2]{\mbox{$\frac{#1}{#2}$}}
\nc{\tr}{\operatorname{Tr}}
\nc{\ox}{\otimes}
\nc{\dg}{\dagger}
\nc{\dn}{\downarrow}
\nc{\cA}{\boldsymbol{\cal A}}
\nc{\ca}{\boldsymbol{a}}
\nc{\cB}{{\cal B}}
\nc{\cC}{{\cal C}}
\nc{\cD}{{\cal D}}
\nc{\cE}{{\cal E}}
\nc{\cF}{{\cal F}}
\nc{\cG}{{\cal G}}
\nc{\cH}{{\cal H}}
\nc{\cI}{{\cal I}}
\nc{\cJ}{{\cal J}}
\nc{\cK}{{\cal K}}
\nc{\cL}{{\cal L}}
\nc{\cM}{\boldsymbol{\cal M}}
\nc{\cN}{{\cal N}}
\nc{\cO}{{\cal O}}
\nc{\cP}{{\cal P}}
\nc{\cQ}{\boldsymbol{\cal Q}}
\nc{\cq}{\boldsymbol{q}}
\nc{\cR}{{\cal R}}
\nc{\cS}{{\cal S}}
\nc{\cT}{{\cal T}}
\nc{\cV}{{\cal V}}
\nc{\cX}{{\cal X}}
\nc{\cx}{\boldsymbol{x}}
\nc{\cY}{{\cal Y}}
\nc{\cZ}{{\cal Z}}
\nc{\cW}{{\cal W}}
\nc{\csupp}{{\operatorname{csupp}}}
\nc{\qsupp}{{\operatorname{qsupp}}}
\nc{\var}{{\operatorname{var}}}
\nc{\rar}{\rightarrow}
\nc{\lrar}{\longrightarrow}
\nc{\polylog}{{\operatorname{polylog}}}
\nc{\wt}{{\operatorname{wt}}}
\nc{\av}[1]{{\left\langle {#1} \right\rangle}}
\nc{\supp}{{\operatorname{supp}}}
\def\x{\xi}
\def\o{\omega}
\def\O{\Omega}
\nc{\RR}{{{\mathbb R}}}
\nc{\CC}{{{\mathbb C}}}
\nc{\FF}{{{\mathbb F}}}
\nc{\NN}{{{\mathbb N}}}
\nc{\ZZ}{{{\mathbb Z}}}
\nc{\PP}{{{\mathbb P}}}
\nc{\QQ}{{{\mathbb Q}}}
\nc{\UU}{{{\mathbb U}}}
\nc{\EE}{{{\mathbb E}}}
\nc{\id}{{\operatorname{id}}}
\nc{\Hom}[2]{\mbox{Hom}(\CC^{#1},\CC^{#2})}
\nc{\rU}{\mbox{U}}
\nc{\ob}[1]{#1}
\nc{\SEP}{{\text{SEP}}}
\nc{\NS}{{\text{NS}}}
\nc{\LOCC}{{\text{LOCC}}}
\nc{\PPT}{{\text{PPT}}}
\nc{\EXT}{{\text{EXT}}}
\nc{\Sym}{{\operatorname{Sym}}}
\definecolor{darkblue}{RGB}{0,76,156}
\definecolor{darkkblue}{RGB}{0,0,153}
\definecolor{blue2}{RGB}{102,178,255}
\definecolor{myred}{RGB}{180,5,4}
\newmdenv[skipabove=7pt,
skipbelow=7pt,
% backgroundcolor=darkblue!15,
innerleftmargin=5pt,
innerrightmargin=5pt,
innertopmargin=5pt,
leftmargin=0cm,
rightmargin=0cm,
innerbottommargin=5pt,
linewidth=1pt]{tBox}
\newmdenv[skipabove=7pt,
skipbelow=7pt,
backgroundcolor=blue2!25,
innerleftmargin=5pt,
innerrightmargin=5pt,
innertopmargin=5pt,
leftmargin=0cm,
rightmargin=0cm,
innerbottommargin=5pt,
linewidth=1pt]{dBox}
\newmdenv[skipabove=7pt,
skipbelow=7pt,
backgroundcolor=darkkblue!15,
innerleftmargin=5pt,
innerrightmargin=5pt,
innertopmargin=5pt,
leftmargin=0cm,
rightmargin=0cm,
innerbottommargin=5pt,
linewidth=1pt]{sBox}
\providecommand{\customgenericname}{}
\newcommand{\newcustomtheorem}[2]{%
  \newenvironment{#1}[1]
  {%
   \renewcommand\customgenericname{#2}%
   \renewcommand\theinnercustomgeneric{##1}%
   \innercustomgeneric
  }
  {\endinnercustomgeneric}
}
\newcommand{\sspan}{\text{Span}}
\newcommand{\MC}{\text{\rm HC}}
\begin{document}
\title{Quantum Advantages in Hypercube Game}

\author{Xiaoyu He$^{1}$}
\email{hexiaoyu14@mails.ucas.ac.cn}

\author{Kun Fang$^{3}$}
\email{kun.fang-1@student.uts.edu.au}

\author{Xiaoming Sun$^{1}$}
\email{sunxiaoming@ict.ac.cn}

\author{Runyao Duan$^{2,3}$}
\email{duanrunyao@baidu.com}

\affiliation{$^1$Institute of Computing Technology, Chinese Academy of Sciences}

\affiliation{$^2$Institute for Quantum Computing, Baidu Inc., Beijing 100193, China}

\affiliation{$^3$Centre for Quantum Software and Information, Faculty of Engineering and Information Technology,\\ University of Technology Sydney, NSW 2007, Australia}

\begin{abstract}
We introduce a novel generalization of the Clauser-Horne-Shimony-Holt (CHSH) game to a multiplayer setting, i.e., Hypercube game, where all $m$ players are required to assign values to vertices on corresponding facets of an $m$-dimensional hypercube. The players win if and only if their answers satisfy both parity and consistency conditions. We completely characterize the maximum winning probabilities (game value) under classical, quantum and no-signalling strategies, respectively. In contrast to the original CHSH game designed to demonstrate the superiority of quantumness, we find that the quantum advantages in the Hypercube game significantly decrease as the number of players increase. Notably, the quantum value decays exponentially fast to the classical value as $m$ increases, while the no-signalling value always remains to be one. 
\end{abstract}
% \date{\today}
\maketitle

\textit{Introduction.}---~Nonlocal games are played by a number of cooperating players against a referee. The players are required to reply to the referee's randomly selected questions with appropriate answers to win the game. Their goal is to collaborate and maximize the average winning probability. Before the game starts, players may agree upon a common strategy. Then they move far apart and cannot communicate with each other while the game is being played. Communications are only allowed between players and the referee.

The nonlocal game model plays a crucial role in both complexity theory and theoretical physics. In the complexity theory, it is closely related with the interactive proof system~\cite{ben1988multi}, efficient proof verification~\cite{babai1991non}, hardness of approximation~\cite{feige1996interactive} and the PCP conjecture~\cite{arora1998proof,arora1998probabilistic}. In theoretical physics, a well-known consequence of earlier works~\cite{bell1964instein,kochen1967problem,clauser1969proposed} shows that quantum entanglement shared between players can allow them to outperform all purely classical strategies for some nonlocal games. This phenomenon confirmed experimentally~\cite{Rosenfeld2017} turns out to be useful in practice, particularly for the self-testing of quantum states~\cite{mayers2004self,coladangelo2017all,vsupic2017simple}, randomness generation~\cite{pironio2010random} and secure key distribution~\cite{acin2007device}.

One of the most important game is the so-called  Clauser-Horne-Shimony-Holt (CHSH) game~\cite{clauser1969proposed,cleve2004consequences} which can be equivalently~\footnote{See the Supplemental Material below.} expressed as follows. Consider a square with two rows and two columns shown in Fig.~\ref{CHSH game} where we use $A_0$, $A_1$, $B_0$, $B_1$ to mark the four edges respectively. The referee randomly chooses two bits $q_1,q_2\in\{0,1\}$ according a uniform distribution and sends $q_1$ to Alice, $q_2$ to Bob. They need to respond to the referee their assignments of two values from $\{+1,-1\}$, one for each vertex on the $q_1$-th column (marked with $A_{q_1}$) and $q_2$-th row (marked with $B_{q_2}$) respectively. They win if and only if their assignments are consistent on their common vertex and the product of their own assignments equals to $1$ except that Alice's product equals to $-1$ if $q_1=1$. 
\vspace{-0.3cm}
\begin{figure}[H]
\begin{minipage}[b]{0.25\textwidth}
\centering
\includegraphics[scale = 1]{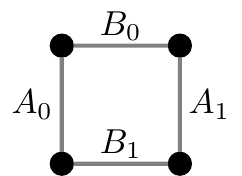}
\end{minipage}\hspace{-0.6cm}
\begin{minipage}[b]{0.22\textwidth}
\caption{\small CHSH game or\\ Hypercube game with $m=2$.}
\label{CHSH game}
\end{minipage}
\end{figure}
\vspace{-0.2cm}
\noindent It is known that quantum entanglement can improve the average winning probability of the CHSH game to approximately $0.85$ while the best classical strategy can only win the game with probability $0.75$.

While nonlocal game has been extensively considered in the bipartite scenario (e.g.~\cite{clauser1969proposed,cleve2004consequences,mermin1990simple,peres1990incompatible,regev2013quantum,lupini2018perfect}), the general multipartite case is less explored. In this work, we aim to introduce a novel generalization of the CHSH game to a multiplayer game, called $m$-player Hypercube game ($\MC_m$). We determine its classical, quantum and no-signalling game values by explicitly constructing the game strategies and showing their optimality. In particular, we find that the quantum advantage decreases as more players involved in this game. Specifically, the quantum value exponentially decays to the classical value as $m$ increasing, while the no-signalling value always remains to be one. Compared with the ``Guess your neighbor's input'' game~\cite{almeida2010guess} which shows no quantum advantage at all, the quantum advantage in $\MC_m$ always exists in spite of the fact that it vanishes asymptotically.

\vspace{0.1cm}
\textit{Nonlocal game and strategies.}---~Consider a general nonlocal game with $m$ players. Let $\cQ_i$ and $\cA_i$ denote the finite sets of possible questions and answers for the $i$-th player $\cP_i$, respectively. Denote $\cQ:=(\cQ_1,\cdots,\cQ_m)$ and $\cA:=(\cA_1,\cdots,\cA_m)$. The referee initiates the game by selecting a question $\cq:= (q_1,\cdots,q_m) \in \cQ$ according to a probability distribution $\pi: \cQ \to [0,1]$ and send each question $q_i$ to the $i$-th player. Given their questions, the players are required to provide the referee with an answer $\ca:=(a_1,\cdots,a_m) \in \cA$ where $a_i$ denotes the answer by the $i$-th player respectively. The probability of answering $\ca$ conditioned on the given question $\cq$ is denoted as $P(\ca|\cq)$, which is referred to as a \emph{correlation function}. Each correlation function corresponds to a strategy taken by the players. Finally the referee evaluates some predicate $V:\cQ\times \cA\to \{0,1\}$ to determine whether the players win, $1$ or lose, $0$. We denote the game as $G=(\pi,V)$.
Then the maximum probability with which players can win the game $G$, is defined as
\begin{align}\label{game values}
  \o_\O(G) := \sup_{P\in \O} \ \sum_{\cq \in \cQ}\sum_{\ca \in \cA} \pi(\cq)  P(\ca|\cq) V(\ca|\cq),
\end{align}
where the supreme is taken over all possible correlation functions in a certain class $\O$. 

A \textit{classical strategy} consists of a function for each player $f_i: \cQ_i \to \cA_i$ that deterministically produces an answer for every question. This type of classical strategy is referred to as a \textit{deterministic strategy}. It corresponds to a \textit{deterministic correlation function} defined by
\begin{align}
  P(\ca|\cq)=\begin{cases} 1,\quad a_i = f_i(q_i), \forall\, i\\ 0, \quad \text{otherwise}. \end{cases}
\end{align}
Since the \textit{stochastic strategy} via shared randomness does not provide advantage to achieve higher winning probability, there is no loss of generality to restrict our attention only to the deterministic ones. The \textit{classical value} denoted as $\o_c(G)$ is the supremum of Eq.~\eqref{game values} taking over all the possible deterministic correlation functions. Note that approximating the classical value of a game is NP-hard in general~\cite{fortnow1994power,kempe2011entangled,ito2009oracularization}.

In a \textit{quantum strategy}, the players may prepare and share a joint quantum state $\ket{\psi}$ prior to the game. Then they can perform a local (projective) measurement $\cM_{q_i,i}:=\{M_{q_i,i}^{a_i}\}_{q_i}$ respectively on their subsystems dependent on their received questions $q_i$ and respond the answers $a_i$ according to their measurement outcomes. The subscript $i$ indicates that the measurement is acting on the $i$-th subsystem. This strategy corresponds to a \textit{quantum correlation function} defined as
\begin{align}
  P(\ca|\cq) = \<M_{q_1,1}^{a_1}\ox \cdots \ox M_{q_m,m}^{a_m}\>_\psi,
\end{align}
where $\<M\>_\psi := \tr M\psi$ and $\psi=\ket{\psi}\bra{\psi}$. Sometimes we ignore the symbol of tensor product for simplicity. The \textit{quantum value} of the game $G$, denoted as $\o_q(G)$, is the supremum of Eq.~\eqref{game values} taking over all the possible quantum correlation functions. The quantum value of a game is QMA-hard to approximate in general~\cite{ji2016compression,fitzsimons2015multiprover,ji2016classical} and there is no standard method to compute it.
Note that strategies involving mixed states or general measurements can all be represented in the above form by expanding the dimension of the initial state, due to the Naimark's theorem~\cite{naimark1943representation}.  

\textit{No-signalling strategy} is a more general type of strategy, which has the corresponding correlation function satisfying \textit{no-signalling correlations}. That is, for any index subset $I=\{i_1,\cdots,i_k\} \subseteq \{1,\cdots,m\}$, the marginal probability satisfies
\begin{align}
  P(\ca_I|\cq) = P(\ca_I|\cq_I),
\end{align}
with $\ca_I:=(a_{i_1},\cdots,a_{i_k})$ and $\cq_I:=(q_{i_1},\cdots,q_{i_k})$. This guarantees that any subset of the players cannot signal their received questions to the others. In general, it suffices to consider the subset $I$ with cardinality one. The \textit{no-signalling value} of the game $G$, denoted as $\o_{ns}(G)$, is the supremum of Eq.~\eqref{game values} taking over all the possible no-signalling correlation functions. Since the no-signalling correlation is characterized by linear constraints, the no-signalling value of a game can be computed via  linear programming.

\vspace{0.2cm}
\textit{$m$-player Hypercube game.}---~Inspired by the geometric approach to defining CHSH game in the Introduction, we generalize this game to the multipartite case where each player is required to assign values to vertices on some facet of an $m$-dimensional hypercube. Let us consider an $m$-dimensional hypercube with $2^m$ vertices and introduce the $m$-player Hypercube game as follows.

\begin{tBox}
\textbf{\underline{Hypercube game ($\MC_m$)}}
\hspace{-0.5cm}
\begin{enumerate}[leftmargin=0.5cm]
\item The referee randomly chooses one of the question $\cq=(q_1,\dots,q_m)\in \{0,1\}^m$ according to a uniform distribution, and sends $q_i$ to the player $\cP_i$. 
\item Each player $\cP_i$ needs to assign $1$ or $-1$ to $2^{m-1}$ vertices in the set
$\cx_i:= \left\{\,\cx \in\{0,1\}^m | \ x_i=q_i\,\right\},$
and sends their assignments to the referee.

\item The players win if and only if 
\begin{enumerate}
  \item (\textit{Parity}): the product of their own assignments equals to $1$ except that the product of the first player's assignments equals to $-1$ if $q_1=1$;
  \item (\textit{Consistency}): the assignments are consistent on all the common vertices $\cx_i \cap \cx_j, \forall\, i\neq j$. 
\end{enumerate}
\end{enumerate}
\end{tBox}

To provide some intuition, we illustrate the case $m=3$ in Fig.~\ref{MC3}. If the selected question is $\cq = (0,0,0)$, each player needs to assign values to the vertices on their marked facet $A_{0}$, $B_0$ and $C_0$ respectively. They win if and only if the products of values on $A_0$, $B_0$ and $C_0$ are all equal to $1$ and their common vertices (marked with red hollow circle) are assigned the same values from different players.

\begin{figure}[H]
\centering
\includegraphics[scale=0.9]{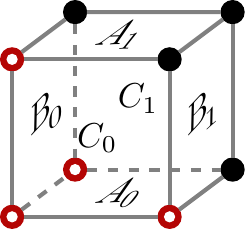}
\caption{Hypercube game with $m=3$. The facets are marked as: bottom ($A_0$), top ($A_1$), left ($B_0$), right ($B_1$), front ($C_0$), back ($C_1$).}
\label{MC3}
\end{figure}

The more players involved, the more difficult for the players to cooperate and win the game. We confirm this intuition and show the following result.
\begin{theorem}\label{MCm theorem}
  For the $m$-player Hypercube game $(\MC_m)$, its classical and no-singalling values are respectively given by
  \begin{align}
  \o_c(\MC_m)=\frac12+\frac{1}{2^m}\quad \text{and}
  \quad \o_{ns}(\MC_m) = 1.
  \end{align}
   Its quantum value $\o_{q}(\MC_m)$ is given by the single letter optimization as follows
\begin{align}\label{quantum value}
\frac{1}{2^m}\max_\theta \big((1+\cos\theta)^{m-1} + (1+\sin\theta)^{m-1}\big),
\end{align}
which can be approximated as 
\begin{align}
\o_q(\MC_m) = \frac{1}{2}+\frac{1}{2^m}+O(\frac{m-1}{4^{m}}).
\end{align}
\end{theorem}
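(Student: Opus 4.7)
The plan is to handle $\omega_c$, $\omega_{ns}$, $\omega_q$ separately, establishing matching upper and lower bounds for each, and then Taylor-expanding the quantum single-letter formula to get the asymptotic.

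For the classical value, I would first exhibit a deterministic strategy attaining $\frac12+\frac1{2^m}$: players $\cP_i$ with $i\geq 2$ assign $+1$ to every vertex on their facet, while player $\cP_1$ assigns all $+1$'s when $q_1=0$ and assigns $-1$ at a single distinguished vertex (with $+1$ elsewhere) when $q_1=1$. Parity holds by construction; consistency holds on all $2^{m-1}$ questions with $q_1=0$ and on exactly one question with $q_1=1$ (the one pushing the single $-1$ off every pairwise intersection), giving $|S|=2^{m-1}+1$ wins out of $2^m$. For the matching upper bound, I would parametrize each deterministic strategy by a function $f_i:\{0,1\}^m\to\FF_2$: summing the two parity equations for $q_i\in\{0,1\}$ in $\FF_2$ gives $\sum_x f_i(x)=0$ for $i\geq 2$ and $\sum_x f_1(x)=1$, while every winning $\cq$ imposes $f_i=f_j$ on $\cx_i\cap\cx_j$. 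A combinatorial argument on the two slices $\{q_1=0\}$ and $\{q_1=1\}$ then shows that more than $2^{m-1}+1$ wins would force consistency to collapse the $f_i$'s into a single global assignment, contradicting the above $\FF_2$-parity dichotomy.

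For the no-signalling value, I would take $P(\ca|\cq)$ to be uniform over global assignments $g:\{0,1\}^m\setminus\{\bar{\cq}\}\to\{-1,+1\}$ satisfying the $m$ parity equations for $\cq$, with each player reading off $a_i = g|_{\cx_i}$; parity and consistency then hold automatically. The crux is verifying no-signalling. I would first show that the characteristic vectors $\{\chi_{\cx_i}\}_{i=1}^m$ are $\FF_2$-linearly independent in $\FF_2^{2^m-1}$ by testing any dependence at $x=\cq$ and $x=\cq\oplus e_j$, so the feasible set has constant cardinality $2^{2^m-1-m}$. An analogous independence argument for the characteristic vectors on $\{0,1\}^m\setminus(\bigcup_{i\in I}\cx_i\cup\{\bar{\cq}\})$ shows that, for any index subset $I$, the count of extensions of a fixed marginal $\ca_I$ satisfying its own marginal parities depends only on $\cq_I$, yielding $P(\ca_I|\cq)=P(\ca_I|\cq_I)$.

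For the quantum value, I would construct a lower-bound strategy from an $m$-qubit GHZ state $(\ket{0}^{\otimes m}+\ket{1}^{\otimes m})/\sqrt{2}$ together with shared randomness encoding a fixed base assignment already satisfying parity. Each player $\cP_i$ measures their qubit in one of two bases parameterized by a common angle $\theta$ (choice depending on $q_i$), and flips the base assignment on one half of their facet according to the outcome, which preserves parity automatically. Consistency then reduces to a prescribed XOR pattern on the $m$ measurement outcomes, whose GHZ probability factorizes into products of single-qubit overlaps of the form $(1\pm\cos\theta)/2$ or $(1\pm\sin\theta)/2$; summing over the $2^m$ questions and two branches $q_1\in\{0,1\}$ reproduces exactly the single-letter expression \eqref{quantum value}. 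For the matching upper bound, I would lift the maximization to the commuting-operator framework and use either an explicit SDP dual certificate or a symmetrization argument exploiting the tensor-product symmetry of the winning operator to collapse the problem to the one parameter $\theta$. The asymptotic $\omega_q=\frac12+\frac1{2^m}+O((m-1)/4^m)$ then follows from Taylor expanding \eqref{quantum value} around its stationary point $\theta^*=O(2^{-m})$ and retaining leading-order terms.

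The main obstacle will be the quantum upper bound: reducing a fully general $m$-partite quantum optimization to a one-parameter extremum is nontrivial and likely requires either a tight SDP dual certificate or a structural reduction using Naimark's dilation together with the permutation/reflection symmetry of the winning operator to force optimal rank-one qubit strategies at each site.
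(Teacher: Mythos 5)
Your lower-bound constructions are sound: the classical strategy matches the paper's, the GHZ-based quantum strategy is the same in spirit and does reproduce Eq.~\eqref{quantum value}, and your no-signalling correlation (uniform over parity-respecting global assignments on $\{0,1\}^m\setminus\{\bar{\cq}\}$, with no-signalling verified by $\FF_2$-rank counting) is a legitimate alternative to the paper's construction via assignments symmetric in the first coordinate. Your classical upper bound is salvageable but the mechanism is not the one you describe: what bounds the win count by $2^{m-1}+1$ is not that too many wins ``collapse the $f_i$ into a single global assignment,'' but a pairwise CHSH-type obstruction --- for each $i\ge 2$ the four product-consistency conditions $\prod_{\cx\in\cx_1\cap\cx_i}f_1(\cx)=\prod_{\cx\in\cx_1\cap\cx_i}f_i(\cx)$ over $(q_1,q_i)\in\{0,1\}^2$ multiply to $-1$ on player $1$'s side and $+1$ on player $i$'s side by the parity constraints, so at least one fails for each $i$, and maximizing the number of surviving questions then gives $2^{m-1}+1$.

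The genuine gap is the quantum upper bound, which you explicitly defer to ``an SDP dual certificate or a symmetrization argument.'' Neither is a proof, and neither is likely to materialize: the quantum value of a multipartite game is not an SDP (NPA-type hierarchies only give upper bounds with no finite-level tightness guarantee), and producing the certificate is exactly the content of the missing argument. The paper's proof rests on three specific ideas you would need. First, relax vertex-by-vertex consistency to consistency of the products $\Pi_{q_1,q_i}(a_j)=\prod_{\cx\in\cx_1\cap\cx_i}a_j(\cx)$, which turns the winning probability for each question into $\big\langle\prod_{i\ge2}\tfrac12(\1+\cO_{q_1,q_i,1}\cO_{q_1,q_i,i})\big\rangle_\psi$ for induced $\pm1$-valued observables on the intersections. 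Second, use the parity constraints to show $\cO_{q_1,q_i,1}=(-1)^{q_1}\cO_{q_1,1-q_i,1}$ and $\cO_{q_1,q_i,i}=\cO_{1-q_1,q_i,i}$, so that after averaging over $\cq$ and a geometric-mean step everything is controlled by operators $\cS_i,\cT_i$ satisfying $\cS_i^2+\cT_i^2=\1$. Third, a new operator inequality --- $\big\langle(\1+\cS)^{m-1}+(\1+\cT)^{m-1}\big\rangle_\psi\le\max_\theta((1+\cos\theta)^{m-1}+(1+\sin\theta)^{m-1})$ whenever $\cS^2+\cT^2=\1$, proved via the joint block structure of $\cS$ and $\cT$ --- which is the actual generalization of Tsirelson's bound that closes the argument. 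Without these (or an equivalent substitute), the matching upper bound in Eq.~\eqref{quantum value} is unproven, and the asymptotic expansion you derive from it is likewise unsupported.
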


For any finite number of players, there always exists a quantum strategy outperforming any classical ones. However, this quantum advantage becomes negligible as the number of players becomes larger and both quantum and classical value exponentially decay to $\frac12$. Surprisingly, the no-signalling correlation is strong enough to resist such decay as shown in Fig.~\ref{compare fig}. We show the proof of Theorem~\ref{MCm theorem} in the following sections.

\begin{figure}[H]
  \centering
  \includegraphics{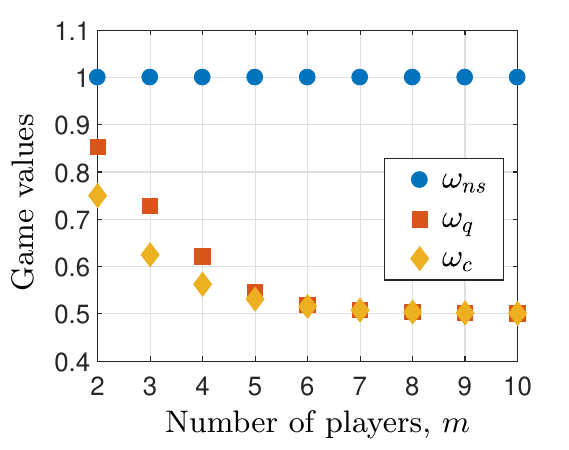}
  \caption{\small This figure depicts the classical (yellow diamond), quantum (red square), and no-signalling (blue dot) values of $\MC_m$.}
  \label{compare fig}
\end{figure}

\vspace{0.1cm}
\textit{Hypercube game with quantum strategies.}---~In this section, we outline the idea to show the quantum value of $\MC_m$. Our game strategy is as follows. The assignments to the four vertices $\cx_{0,0}:=(0,0,\dots,0)$, $\cx_{0,1}:=(0,1,\dots,1)$, $\cx_{1,0}:=(1,0,\dots,0)$ and $\cx_{1,1}:=(1,1,\dots,1)$ depend on the measurement outcomes while others are assigned to be $1$. This reduces the $\MC_m$ to a game similar to the CHSH game. That is, the first player ($\cP_1$) acts as Alice and assigns the vertices on the columns while other players ($\cP_i,i\geq 2$) play the same role as Bob and assign the vertices on the rows, as shown in Fig.~\ref{Reduced MCm}. Since all the common vertices between every two players ($\cP_i,\cP_j, i\neq j$) are the ones intersecting with the first player, we only need to check the consistency between $\cP_1$ and $\cP_i$ for $i \geq 2$, i.e., 
\begin{align}\label{consis condition}
   a_1(q_1,q_i,\dots,q_i)=a_i(q_1,q_i,\dots,q_i), \forall\, i \geq 2,
\end{align}
where $a_i(\cx)$ denotes the value that the $i$-th player assigns to the vertex $\cx$.

\begin{figure}[H]
\centering
\includegraphics[scale=0.9]{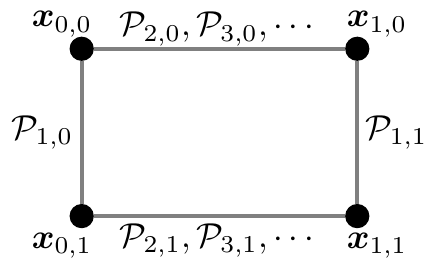}
\caption{\small Each player ($\cP_i$) needs to assign vertices on the line marked with $\cP_{i,q_i}$ for $q_i\in \{0,1\}$.}
\label{Reduced MCm}
\end{figure}

In this quantum strategy, each player holds one qubit of a $m$-qubit GHZ state
\begin{align}
\ket{\psi} = (\ket{0}^{\otimes m} + \ket{1}^{\otimes m})/\sqrt{2},
\end{align}
and performs projective measurement associated with the observable $\cO_{q_i,i} = Z_{\theta_{q_i,i}}$ where 
\begin{align*}
Z_\theta :=
\begin{bmatrix}
 \cos\theta & \sin\theta \\
 \sin\theta & -\cos\theta
\end{bmatrix} \ \ \text{and}\ \
  \theta_{q_i,i}= \begin{cases} 
     \ \ q_i\cdot \frac{\pi}{2} & \text{if}\ i = 1,\\[2pt]
     (-1)^{q_i}\alpha & \text{if}\ i \geq 2.
  \end{cases}
\end{align*}
Note that this is a generalization of the optimal quantum strategy for the CHSH game.
Denote $o_{q_i,i}$ as the $i$-th player's measurement outcome given the question $q_i$. Then the players can perform the assignments as follows:
\begin{align}
  a_1(q_1,0,\cdots,0) & = o_{q_1,1},\\
  a_1(q_1,1,\cdots,1) & = (-1)^{q_1}o_{q_1,1},\\
   a_i(0,q_i,\dots,q_i) & = o_{q_i,i},\ \forall\, i \geq 2,\\
  a_i(1,q_i,\dots,q_i) & = o_{q_i,i},\ \forall\, i \geq 2.
\end{align}
The parity conditions are always satisfied and the winning probability is given by the probability that the consistency~\eqref{consis condition} holds. 
If $q_1 = 0$, the consistency~\eqref{consis condition} holds if and only if the outcomes are $\{1,1,\cdots,1\}$ or $\{-1,-1,\cdots,-1\}$. Thus the winning probability is given by
\begin{align}\label{win 1}
   \Big\<\prod_{i=1}^m \frac{\1+\cO_{q_i,i}}{2} + \prod_{i=1}^m \frac{\1-\cO_{q_i,i}}{2}\Big\>_\psi.
\end{align} 
On the other hand, if $q_1 = 1$, the consistency \eqref{consis condition} holds if and only if the outcomes are 
$\{1,(-1)^{q_2},\cdots,(-1)^{q_m}\}$ or $\{-1,(-1)^{q_2+1},\cdots,(-1)^{q_m+1}\}$, with the winning probability given by
\begin{align}\label{win 2}
  & \Big\<\frac{\1+\cO_{1,1}}{2}\prod_{i=2}^m \frac{\1+(-1)^{q_i}\cO_{q_i,i}}{2}\notag\\& \hspace{2cm}  + \frac{\1-\cO_{1,1}}{2} \prod_{i=2}^m \frac{\1+(-1)^{q_i+1}\cO_{q_i,i}}{2}\Big\>_\psi
\end{align}
Note that $\cO_{q_i,i}$ are mutually commute and if $r_1 = \pm1$ it holds
\begin{align}   
  \prod_{i=1}^m \frac{1+r_i}{2} + \prod_{i=1}^m \frac{1-r_i}{2} = \prod_{i=2}^m \frac{1+r_1r_i}{2}.
\end{align}
Hence for any given question $\cq$, we can unify the winning probability in Eqs.~\eqref{win 1} and~\eqref{win 2} as 
\begin{align}\label{wining prob for given question}
P_{\cq} = \Big\<\prod_{i=2}^m \frac{\1+(-1)^{q_1q_i}\cO_{q_1,1}\cO_{q_i,i}}{2}\Big\>_\psi.
\end{align}
By straightforward calculation, the average winning probability of our quantum strategy is given by
\begin{align}
\frac{\sum_{\cq}P_{\cq}}{2^m}
= \frac{(1+\cos\alpha)^{m-1} + (1+\sin\alpha)^{m-1}}{2^m}.
\end{align}
Finally, we can choose the optimal parameter $\alpha$ to make the average winning probability as large as possible.

We then proceed to show the optimality of our strategy, i.e., any quantum strategy will not give an average winning probability greater than Eq.~\eqref{quantum value}. In the following we only outline some key steps with detailed proofs delegated to the Supplemental Material. 
Note that the optimal strategy in a game has to be the one satisfying the parity constraint first. If a player replies an answer violating the parity, it is equivalent that he/she gives up and loses this round of the game. This is no better than providing a random answer satisfying the parity and trying his / her luck on the consistency. Thus without loss of generality, we only need to focus on strategies which satisfy the parity conditions.

The main ingredient to show the optimality is to relax the winning conditions of $\MC_m$ and obtain a matching upper bound on the average winning probability.~Specifically, instead of checking the consistency of every single common vertex, we consider checking the product of their assignments. Recall that $\cx_1 \cap \cx_i$ denotes the set of common vertices by the first and the $i$-th player. Denote the product of assignments on $\cx_1 \cap \cx_i$ with respect to $\cP_j$'s $ (j \in \{1,i\})$ answers as 
\begin{align}
  \Pi_{q_1,q_i}(a_j)& := \prod\nolimits_{\cx \in \cx_1 \cap \cx_i} a_j(\cx).
\end{align}
Then it is necessary to win the game that the conditions 
\begin{align}
  \Pi_{q_1,q_i}(a_1) =  \Pi_{q_1,q_i}(a_i), \forall \,i \geq 2 
  \label{eq:relax 1},
\end{align}
hold, and we have the following relaxation of the predicate,
\begin{align}
  V(\ca|\cq) \leq \prod\nolimits_{i=2}^m \big[\Pi_{q_1,q_i}(a_1) =  \Pi_{q_1,q_i}(a_i)\big],
  \end{align}
where $[g]$ is the Iverson bracket, i.e., it takes value $1$ if the statement $g$ is true, otherwise it takes value $0$.

Since we change our concerning objects from the common vertices to the intersecting edges, it naturally induces  observables on the intersecting edges as follows, 
\begin{align}
  \forall\, i \geq 2,\ \cO_{q_1,q_i,1}& := \sum\nolimits_{a_1} \Pi_{q_1,q_i}(a_1) M_{q_1,1}^{a_1},\label{observable 1}\\
  \forall\, i \geq 2,\ \cO_{q_1,q_i,i}&:= \sum\nolimits_{a_i} \Pi_{q_1,q_i}(a_i) M_{q_i,i}^{a_i},\label{observable 2}
\end{align}
where $\cO_{q_1,q_i,1}$ and $\cO_{q_1,q_i,i}$ represent the first and the $i$-th player's observables respectively.

For any quantum strategy satisfying the parity conditions and any given question $\cq$, we will find a similar result as in Eq.~\eqref{wining prob for given question} that
  \begin{align}
  P_{\cq} \leq \Big\<\prod_{i=2}^m \frac{\1+\cO_{q_1,q_i,1}\cO_{q_1,q_i,i}}{2}\Big\>_\psi,
\end{align}
which implies that the average winning probability satisfies
\begin{align}
  \frac{\sum_{\cq}P_{\cq}}{2^m} \leq \frac{\max_i\big\<(\1+\cS_{i})^{m-1}+(\1+\cT_i)^{m-1}\big\>_\psi}{2^m},\label{tmp 3}
\end{align}

\vspace{-0.6cm}
\begin{align}
\text{with} \hspace{0.5cm} \cS_i &:= (\cO_{0,0,1}(\cO_{0,0,i}+\cO_{0,1,i}))/2,\\
\cT_i&:= (\cO_{1,0,1}(\cO_{0,0,i}-\cO_{0,1,i}))/2.
\end{align}
Since $\cO_{q_1,q_i,1}^2 = \cO_{q_1,q_i,i}^2 = \1$, we have $\cS_i^2 + \cT_i^2 = \1, \forall \, i\geq 2$. Combining with Lemma~\ref{tech lemma} below, we have the desired result.

\begin{lemma}\label{tech lemma}
  For any Hermitian operators $\cS$ and $\cT$ satisfying $\cS^2 + \cT^2 = \1$ and any pure state $\ket{\psi}$, $m \geq 2$, it holds that
  \begin{align}
    & \big\<(\1+\cS)^{m-1} + (\1+\cT)^{m-1}\big\>_\psi \notag \\
    & \hspace{1.1cm}\leq \max_{\theta}((1+\cos\theta)^{m-1} + (1+\sin \theta)^{m-1}).
    \label{general CHSH inequality}
  \end{align}
\end{lemma}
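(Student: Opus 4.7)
The plan is to use the operator identity $\cS^2+\cT^2=\1$ to block-diagonalize the Hilbert space along the joint spectrum of $(\cS^2,\cT^2)$ and, on each block, to replace $(\1+\cS)^{m-1}$ and $(\1+\cT)^{m-1}$ by scalar multiples of the identity in the L\"owner order. This reduces the claimed inequality to the one-parameter scalar maximum on the right-hand side.

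First I would observe that $\cT^2=\1-\cS^2$ commutes with $\cS^2$, and that each of $\cS$ and $\cT$ commutes with $\cS^2$ (trivially) and hence with $\cT^2$. Consequently $\cS^2$ and $\cT^2$ share a spectral decomposition, and both $\cS$ and $\cT$ preserve every common eigenspace. Writing $V_\alpha$ for the joint eigenspace on which $\cS^2=\alpha\1$ and $\cT^2=(1-\alpha)\1$, with $\alpha\in[0,1]$, one may decompose $\ket{\psi}=\bigoplus_\alpha\sqrt{p_\alpha}\,\ket{\psi_\alpha}$ with $\sum_\alpha p_\alpha=1$. Since $(\1+\cS)^{m-1}$ and $(\1+\cT)^{m-1}$ are block-diagonal in this decomposition, it suffices to establish the bound on a single block $V_\alpha$.

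The crux is the following operator inequality on $V_\alpha$. Because $\cS^2=\alpha\1$ there, the restriction of $\cS$ has spectrum in $\{+\sqrt\alpha,-\sqrt\alpha\}$, so writing $\cS=\sqrt\alpha(\pi_+-\pi_-)$ with spectral projectors $\pi_\pm$ satisfying $\pi_++\pi_-=\1_{V_\alpha}$ yields
\[
(\1+\cS)^{m-1}=(1+\sqrt\alpha)^{m-1}\pi_++(1-\sqrt\alpha)^{m-1}\pi_-\preceq(1+\sqrt\alpha)^{m-1}\,\1,
\]
and symmetrically $(\1+\cT)^{m-1}\preceq(1+\sqrt{1-\alpha})^{m-1}\,\1$. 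Taking the expectation in $\ket{\psi_\alpha}$ gives
\[
\big\langle(\1+\cS)^{m-1}+(\1+\cT)^{m-1}\big\rangle_{\psi_\alpha}\le(1+\sqrt\alpha)^{m-1}+(1+\sqrt{1-\alpha})^{m-1}.
\]

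Finally, substituting $\sqrt\alpha=\cos\theta$ and $\sqrt{1-\alpha}=\sin\theta$ for $\theta\in[0,\pi/2]$, the block bound becomes $(1+\cos\theta)^{m-1}+(1+\sin\theta)^{m-1}$, which is at most the right-hand side of~\eqref{general CHSH inequality}. The full expectation is the convex combination $\sum_\alpha p_\alpha\langle\cdot\rangle_{\psi_\alpha}$, and hence bounded by the same maximum. The only step that merits a line of justification is the block-diagonalization, which is immediate from the commutation relations above; after that everything reduces to the elementary observation that $x\mapsto(1+x)^{m-1}$ on $[-\sqrt\alpha,\sqrt\alpha]$ is maximized at $x=+\sqrt\alpha$, so I do not anticipate any serious obstacle.
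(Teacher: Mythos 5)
Your proposal is correct and follows essentially the same route as the paper's proof: both decompose the Hilbert space into the joint eigenspaces of $\cS^2$ and $\cT^2=\1-\cS^2$ (which $\cS$ and $\cT$ both preserve), bound the expectation block by block, and reduce to the scalar maximum over $\theta$ via $\sqrt{\alpha}=\cos\theta$. The only cosmetic difference is that you apply the L\"owner-order bound $(\1+\cS)^{m-1}\preceq(1+\sqrt{\alpha})^{m-1}\1$ directly on each block, whereas the paper expands $(\1+\cS)^{m-1}$ binomially and bounds each moment $\<\cS^k\>_{\psi_i}$ separately.
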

We present the proof of this Lemma in the Supplemental Material.
It is worth mentioning that Eq.~\eqref{general CHSH inequality} will reduce to the CHSH inequality for $m = 2$. This implies that Eq.~\eqref{general CHSH inequality} itself is a generalization of the CHSH inequality.

\vspace{0.2cm}
\textit{Hypercube game with classical strategies.}---~The classical strategy can be chosen that every player assigns value $1$ to all the vertices except that the first player assigns $-1$ to the vertex $\cx_0 = (1,\cdots ,1)$ if $q_1 = 1$. Then the players can win the game for the cases when $q_1 = 0$ or $\cq = (1,0,\cdots,0)$. Otherwise, they will lose the game due to the inconsistency at the vertex $\cx_0$. The average wining probability of such strategy is given by $1/2+1/2^m$. 

Following similar steps as the converse part of the quantum strategy, the observable $\cO_{q_1,q_i,1}$ and $\cO_{q_1,q_i,i}$ in Eqs.~\eqref{observable 1} and~\eqref{observable 2} will reduce to classical observables taking values from $\{+1,-1\}$. Thus for any $i$, we have $\cS_i = 0$, $\cT_i = \pm 1$ or $\cS_i = \pm1$, $\cT_i = 0$. Due to Eq.~\eqref{tmp 3} we have the average winning probability no greater than $1/2+1/2^m$. This concludes the optimality of the classical strategy.

\vspace{0.2cm}
\textit{Hypercube game with no-signalling strategies.}---~Denote $K$ as a global assignment to all the $2^m$ vertices and its restricted assignment on the set of vertices $\cx_i$ as $K(\cx_i)$. We say the tuplet $(\ca,\cq)$ is consistent with the global assignment $K$ if $K(\cx_i) = a_i, \forall\, i$. Denote $\cK_{\rm sym}:=\{K\ |\ K(x_1,\cdots, x_m) = K(1-x_1,\cdots,x_m), \forall\, \cx\}$ as the set of all the symmetric assignments with respect to the first coordinate. Let $\boldsymbol \cZ$ be the set of all the tuples $(\ca,\cq)$ consistent with some $K\in\cK_{\rm sym}$ and satisfying the parity condition $\prod_{x\in \cx_1} a_1(x) = (-1)^{q_1}$. Note that the first condition can guarantee the parity of the players ($\cP_i, i\geq 2$) and the consistency of all the players.  We define a correlation as follows,
\begin{align}
  P(\ca|\cq):=\begin{cases}
    \frac{1}{2^{2^{m-1}-1}} & \text{if} (\ca,\cq) \in \boldsymbol\cZ,\\[2pt]
    \quad \ \ 0 & \text{otherwise.}
  \end{cases}
\end{align}

The next step is to show that  $P(\ca|\cq)$ is a legitimate no-signalling correlation.
For any given question $\cq$, the answer $a_1$ completely determines the global answer $\ca$, if $(\ca,\cq)\in \cZ$. Since there are $2^{2^{m-1}-1}$ suitable choices of $a_1$, we have $\sum_{\ca} P(\ca|\cq) = 1$ for any $\cq$.

If $(a_2,\cdots,a_m)$ does not satisfy either parity condition, consistency condition or being symmetric with respect to the first coordinate, we have $P(\ca|\cq) = 0, \forall\ a_1,\cq$. Thus $\sum_{a_1} P(\ca|\cq) = 0$ which is independent of $q_1$. Otherwise, there exists a unique $a_1$  such that $(\ca,\cq) \in \cZ$ since the consistency with other players determines all the assignments of $a_1$ except at the vertex $(q_1,1-q_2,\cdots,1-q_m)$. However, the assignment of this vertex is determined by the parity condition. Thus there is exactly one non-zero term in the summation $\sum_{a_1} P(\ca|\cq)$, i.e., $\sum_{a_1} P(\ca|\cq) = \frac{1}{2^{2^{m-1}-1}}$ and being independent of $q_1$. Without loss of generality, we can make a similar argument with other players and show that $\sum_{a_i} P(\ca|\cq)$ is independent of $q_i$. This concludes that $P(\ca|\cq)$ is a well-defined no-signalling correlation.

Finally, note that $(\ca,\cq) \in \cZ$ implies $V(\ca|\cq) = 1$. For any given question $\cq$, we have
\begin{align}
  \sum_{\ca \in \cA} P(\ca|\cq) V(\ca|\cq) = \sum_{\ca \in \cA} P(\ca|\cq) = 1,
\end{align}
which makes the average winning probability to be one.

\vspace{0.2cm}
\textit{Discussions.}---~We introduced a generalization of the well-known CHSH game to a multipartite scenario and obtained its classical, quantum and no-signalling game values. In particular, the quantum advantage decreases as more players involved while the no-signalling correlation is strong enough to assist the players to win the game deterministically. This set a big difference between no-signalling correlation and quantum correlation in the multipartite setting. 
We leave the analysis of the rigidity and parallel repetition of $\MC_m$ in the future study. Since the CHSH game forms a basis for most state-of-the-art device-independent quantum key distribution (DIQKD) protocols (c.f.~\cite{pironio2009device,reichardt2012classical,arnon2018practical}), our generalization in this work may shed lights on the study of multipartite DIQKD in the future.

\vspace{0.2cm}
\begin{acknowledgments}
We would like to thank Laura Man\v{c}inska for helpful discussions. Note that part of the works were done while RD was at the University of Technology Sydney.
\end{acknowledgments}

\bibliographystyle{apsrev4-1}
\bibliography{BibDoi}

\newpage
\onecolumngrid
\begin{center}
\vspace*{\baselineskip}
{\textbf{\large Supplemental Material \\[3pt] Quantum Advantages in Hypercube Game}}\\[1pt] \quad \\
\end{center}

\renewcommand{\theequation}{S\arabic{equation}}
\renewcommand{\thetheorem}{S\arabic{theorem}}
\setcounter{equation}{0}
\setcounter{figure}{0}
\setcounter{table}{0}
\setcounter{section}{0}

\section{Equivalence of CHSH game}
We show that the CHSH game (or $\MC_2$) presented in the main text is equivalent to the one used in many literatures (e.g.~\cite{cleve2004consequences,Brunner2014}). It is usually presented as follows:
\begin{enumerate}
\item The referee randomly and uniformly chooses the question $q_i\in \{0,1\}$, and sends $q_i$ to the player $\cP_i$. 
\item Each player $\cP_i$ needs to answer one bit $a_i\in\{0,1\}$ to the referee.

\item The players win if and only if $a_1 \oplus a_2 = q_1q_2$.
\end{enumerate}

 Due to the parity constraint in the $\MC_2$, each player only needs to produce one bit of the answer and fill in the other one accordingly. Without loss of generality, the first player assigns $\{a_1,(-1)^{q_1}a_1\}$ to vertices $\{(q_1,0),(q_1,1)\}$ respectively and the second player assigns $\{a_2,a_2\}$ to vertices $\{(0,q_2),(1,q_2)\}$ respectively. According to the winning rules of $\MC_2$, they win if and only if $a_1 = a_2$ for $q_1q_2 = 0$ and $a_1 = -a_2$ for $q_1q_2 = 1$. This is equivalent to the condition in the third item above.

\section{Proof for the optimality of quantum strategy}

The main ingredient to show the optimality of our quantum strategy is to relax the winning conditions of $\MC_m$ and obtain a matching upper bound on the average winning probability.~Specifically, instead of checking the consistency of every single common vertex, we consider checking the product of their assignments. Recall that $\cx_1 \cap \cx_i =\{\,\cx \in\{0,1\}^m | \ x_1 = q_1, x_i=q_i\,\}$ denotes the set of common vertices by the first and the $i$-th player. Denote the product of assignments on $\cx_1 \cap \cx_i$ with respect to $\cP_j$'s $ (j \in \{1,i\})$ answers as 
\begin{align}
  \Pi_{q_1,q_i}(a_j)& := \prod_{\cx \in \cx_1 \cap \cx_i} a_j(\cx).
\end{align}
Then the following conditions are necessary to win the game:
\begin{align}
  \Pi_{q_1,q_i}(a_1) =  \Pi_{q_1,q_i}(a_i), \forall \,i \geq 2 \label{s eq:relax 1},
\end{align}
and we have the following relaxation of the predicate,
\begin{align}\label{s eq: relax predicate}
  V(\ca|\cq) \leq \prod_{i=2}^m \big[\Pi_{q_1,q_i}(a_1) =  \Pi_{q_1,q_i}(a_i)\big],
  \end{align}
where $[g]$ is the Iverson bracket, i.e., it takes value $1$ if the statement $g$ is true, otherwise it takes value $0$.
For any quantum strategy satisfying the parity conditions and any given question $\cq$, we have the winning probability that
\begin{align}
  P_{\cq} & = \Big\<\sum_{\ca} V(\ca|\cq) \prod_{i=1}^m M_{q_i,i}^{a_i}\Big\>_\psi\\
  & \leq \Big\<\sum_{\ca} \prod_{i=2}^m \big[\Pi_{q_1,q_i}(a_1) =  \Pi_{q_1,q_i}(a_i)\big] \prod_{i=1}^m M_{q_i,i}^{a_i}\Big\>_\psi\\
  & = \Big\<\sum_{\ca} \prod_{i=2}^m \Big(\big[\Pi_{q_1,q_i}(a_1) =  \Pi_{q_1,q_i}(a_i)\big] M_{q_i,i}^{a_i}M_{q_1,1}^{a_1}\Big)\Big\>_\psi\\
  & = \Big\<\prod_{i=2}^m \sum_{a_1,a_i} \Big(\big[\Pi_{q_1,q_i}(a_1) =  \Pi_{q_1,q_i}(a_i)\big] M_{q_i,i}^{a_i}M_{q_1,1}^{a_1}\Big)\Big\>_\psi.
  \end{align}
The first line follows from the definition. The second line follows from Eq.~\eqref{s eq: relax predicate}. The third line follows from the fact that $M_{q_i,i}^{a_i}$ are projections. In the last line we swap the summation and the product.

Since we change our concerning objects from the common vertices to the intersecting edges, it naturally induces  observables on the intersecting edges as follows, 
\begin{align}
  \forall\, i \geq 2,\ \cO_{q_1,q_i,1}& := \sum\nolimits_{a_1} \Pi_{q_1,q_i}(a_1) M_{q_1,1}^{a_1},\label{s observable 1}\\
  \forall\, i \geq 2,\ \cO_{q_1,q_i,i}&:= \sum\nolimits_{a_i} \Pi_{q_1,q_i}(a_i) M_{q_i,i}^{a_i},\label{s observable 2}
\end{align}
where $\cO_{q_1,q_i,1}$ and $\cO_{q_1,q_i,i}$ represent the first and the $i$-th player's observables respectively. Note that $[x=y] = \frac12 (1+xy)$ for $x,y\in \{+1,-1\}$, we have 
\begin{align}
\big[\Pi_{q_1,q_i}(a_1) =  \Pi_{q_1,q_i}(a_i)\big] = \frac{1}{2}(1+ \Pi_{q_1,q_i}(a_1)  \Pi_{q_1,q_i}(a_i)).\end{align} and the winning probability
\begin{align}
  P_{\cq}& \leq \Big\<\prod_{i=2}^m \Big(\sum_{a_1,a_i}\frac{M_{q_i,i}^{a_i} M_{q_1,1}^{a_1}}{2} + \sum_{a_1,a_i}\frac{\Pi_{q_1,q_i}(a_1) M_{q_1,1}^{a_1} \Pi_{q_1,q_i}(a_i) M_{q_i,i}^{a_i}}{2} \Big)\Big\>_\psi\\
  & = \Big\<\prod_{i=2}^m \Big(\sum_{a_1,a_i}\frac{M_{q_i,i}^{a_i} M_{q_1,1}^{a_1}}{2} + \frac{\cO_{q_1,q_i,1}\cO_{q_1,q_i,i}}{2} \Big)\Big\>_\psi\label{tmp1}\\
  & = \Big\<\prod_{i=2}^m \frac{\1+\cO_{q_1,q_i,1}\cO_{q_1,q_i,i}}{2}\Big\>_\psi.
\end{align}
The second line follows from the definition in Eqs.~\eqref{s observable 1} and~\eqref{s observable 2}. The last line follows from the completeness of quantum measurement $\sum_{a_i} M_{q_i,i}^{a_i} = \1$. Thus the average winning probability is bounded by
\begin{align}
  \frac{1}{2^m}\sum_{\cq}P_{\cq} & \leq \frac{1}{2^m}\sum_{\cq}\Big\<\prod_{i=2}^m \frac{\1+\cO_{q_1,q_i,1}\cO_{q_1,q_i,i}}{2}\Big\>_\psi = \frac{1}{2^m} \sum_{q_1} \Big\<\prod_{i=2}^m \sum_{q_i}\frac{\1+\cO_{q_1,q_i,1}\cO_{q_1,q_i,i}}{2}\Big\>_\psi.
\end{align}
Due to the parity conditions, we can check that $\cO_{q_1,q_i,1} = (-1)^{q_1} \cO_{q_1,1-q_i,1}$ and $\cO_{q_1,q_i,i} = \cO_{1-q_1,q_i,i}$. Then we have 
\begin{align}
  \frac{1}{2^m}\sum_{\cq}P_{\cq} \leq \frac{1}{2^m} \Big\<\prod_{i=2}^m (\1+\cS_i) + \prod_{i=2}^m (\1+\cT_i)\Big\>_\psi,\label{tmp2}
\end{align}
with 
\begin{align}
\cS_i &:= \frac{1}{2}(\cO_{0,0,1}(\cO_{0,0,i}+\cO_{0,1,i})),\\
\cT_i&:= \frac{1}{2}(\cO_{1,0,1}(\cO_{0,0,i}-\cO_{0,1,i})).
\end{align}
Note that $-\1 \leq \cO_{q_1,q_i,1}, \cO_{q_1,q_i,i} \leq \1$ and $\{\cS_i\}$, $\{\cT_i\}$ are mutually commute. Thus $\1+\cS_i$ and $\1+\cT_i$ are all positive operators. According to the geometric mean inequality, it holds
\begin{align}
  \frac{\sum_{\cq}P_{\cq}}{2^m}&\leq \frac{\sum_{i=2}^m\big\<(\1+\cS_i)^{m-1}+(\1+\cT_i)^{m-1}\big\>_\psi}{2^m(m-1)}\\
  & \leq \frac{\max_i\big\<(\1+\cS_{i})^{m-1}+(\1+\cT_i)^{m-1}\big\>_\psi}{2^m}.\label{s tmp 3}
\end{align}
Since $\cO_{q_1,q_i,1}^2 = \cO_{q_1,q_i,i}^2 = \1$, we have $\cS_i^2 + \cT_i^2 = \1,\ \forall\, i$. Combining with Lemma 2, we have the desired result.

\section{Technical lemmas}

\begin{customlemma}{2}
  For any Hermitian operators $\cS$, $\cT$ such that $\cS^2 + \cT^2 = \1$, any pure state $\ket{\psi}$ and $m \in \mathbb N_+$,  it holds that
  \begin{align}
    \big\<(\1+\cS)^{m} + (\1+\cT)^{m}\big\>_\psi
    \leq \max_{\theta}((1+\cos\theta)^m + (1+\sin \theta)^m).
    \label{s CHSH general}
  \end{align}
\end{customlemma} 
\begin{proof}
 Denote $\cS$'s eigenvectors $\{\ket{u_i}\}$ with corresponding eigenvalues $\{\alpha_i\}$ and $\cT$'s eigenvectors $\{\ket{v_i}\}$ with corresponding eigenvalues $\{\beta_i\}$.  
Since $\cS^2 + \cT^2 = \1$, we can check that for any $\beta_i$,
\begin{align}
\sspan_{\beta_j^2=\beta_i^2}\{\ket{v_j}\}=\sspan_{\alpha_j^2=1-\beta_i^2}\{\ket{u_j}\}.
\end{align}
Denote the Hilbert space $\mathcal{H}=\bigoplus_{i=1}^n\mathcal{H}_i$, where $\mathcal{H}_i:=\sspan_{\beta_j^2=\beta_i^2}\{\ket{v_j}\}$. Decompose $\ket{\psi} = \sum_{i} w_i \ket{\psi_i}$ with $\ket{\psi_i} \in \cH_i$. Then $\cS^k\ket{\psi_i}, \cT^k \ket{\psi_i}\in\mathcal{H}_i$ and $\<\cT^k\>_{\psi_i} \leq |\beta_i|^k$, $\<\cS^k\>_{\psi_i}\leq (\sqrt{1-\beta_i^2})^k$.  We have
\begin{align}
\big\<(\1+\cS)^{m} + (\1+\cT)^{m}\big\>_\psi
=\ & \sum_{k=0}^{m}\binom{m}{k} \<\cS^k+\cT^k\>_\psi\\
= &\sum_{k=0}^{m}\binom{m}{k} \sum_{i=1}^n |w_i|^2 \<\cS^k+\cT^k\>_{\psi_i}\\
\leq&\sum_{k=0}^{m}\binom{m}{k}\sum_{i=1}^n |w_i|^2(\sqrt{1-\beta_i^2}^k+\beta_i^k)\\
=&\sum_{i=1}^{n}|w_i|^2 \sum_{k=0}^{m}\binom{m}{k}(\sqrt{1-\beta_i^2}^k+\beta_i^k)\\
=&\sum_{i=1}^{n} |w_i|^2 ((1+\sqrt{1-\beta_i^2})^{m}+(1+\beta_i)^{m})\\
\leq &\max_\theta ((1+\cos\theta)^{m}+(1+\sin\theta)^{m}).
\end{align}
\end{proof}

\begin{remark}
	The CHSH inequality is stated as 
	\begin{align}
		\<A_0 \ox (B_0 + B_1) + A_1\ox (B_0-B_1)\>_\psi \leq 2\sqrt{2},
		\label{CHSH}
	\end{align}
	where $\psi$ is a pure state, $A_0$ and $A_1$ ($B_0$ and $B_1$) are Alice's (Bob's) observables which are Hermitian operators with spectrum $\{-1,+1\}$. Denote $\cS = A_0 \ox (B_0 + B_1)/2$ and $\cT = A_1\ox (B_0-B_1)/2$. We have $\cS^2 + \cT^2 = \1$. Thus Eq.~\eqref{CHSH} is equivalent to $\<\cS + \cT\>_\psi \leq \sqrt{2}$, which is a special case of Eq.~\eqref{s CHSH general}.
\end{remark}

\begin{lemma}
Denote $r(\theta)=(1+\cos\theta)^{m} + (1+\sin \theta)^{m}$. For any $m \geq 1$, it holds that
\begin{align}
2^{m} + 1 + \frac{m}{2^{m+1}}\leq \max_{\theta}r(\theta) \leq 2^{m} + 1 + \frac{8m}{2^{m+1}}.
\end{align}
\end{lemma}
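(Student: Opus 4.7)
The plan is to bound $r(\theta) := (1+\cos\theta)^{m} + (1+\sin\theta)^{m}$ above and below separately. For the lower bound, I evaluate $r$ at the explicit angle $\theta_0$ with $\sin\theta_0 = 2^{-m}$. Since $\sqrt{1-y} \geq 1-y$ for $y \in [0,1]$, one has $\cos\theta_0 \geq 1 - 4^{-m}$; two applications of Bernoulli's inequality then give $(1+\cos\theta_0)^{m} \geq 2^{m}(1-2^{-2m-1})^{m} \geq 2^{m} - m/2^{m+1}$ and $(1+\sin\theta_0)^{m} = (1+2^{-m})^{m} \geq 1+m/2^{m}$. Summing yields $r(\theta_0) \geq 2^{m} + 1 + m/2^{m+1}$, the claimed lower bound.

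For the upper bound, the symmetry $r(\theta) = r(\pi/2 - \theta)$ together with the trivial bound $r(\theta) \leq 1 + 2^{m}$ whenever $\cos\theta \leq 0$ or $\sin\theta \leq 0$ reduces the problem to $\theta \in [0, \pi/4]$. I split on $s := \sin\theta$. In Case~A ($s \leq 2/2^{m}$), use $(1+\cos\theta)^{m} \leq 2^{m}$ trivially and $(1+s)^{m} \leq (1+2/2^{m})^{m} \leq e^{2m/2^{m}} \leq 1 + 4m/2^{m}$, where the last step invokes $e^{x} \leq 1+2x$ on $[0,1]$ (valid since $2m/2^{m} \leq 1$ for $m \geq 2$). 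This already yields $r(\theta) \leq 2^{m} + 1 + 8m/2^{m+1}$.

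In Case~B ($s > 2/2^{m}$), the estimate $1-\cos\theta \geq s^{2}/2$ (from $\sqrt{1-s^{2}} \leq 1 - s^{2}/2$) gives the tighter bound $(1+\cos\theta)^{m} \leq 2^{m}(1-s^{2}/4)^{m}$. Expanding via the alternating-series bound $(1-x)^{m} \leq 1 - mx + \binom{m}{2}x^{2}$ (whose remainder has the right sign in the relevant range), together with $(1+s)^{m} \leq 1 + ms + (ms)^{2}$ (valid for $ms \leq 1$), the required inequality reduces to $2ms \leq 2^{m-2}ms^{2} + 4m/2^{m}$, equivalent to the perfect-square identity $2^{m-2}m(s - 4/2^{m})^{2} \geq 0$; the residual quartic of order $m^{2}/8^{m}$ fits within the slack. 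For the sub-range $ms > 1$, the exponential decay of $(1-s^{2}/4)^{m}$ alone suffices, with the crude bound $(1+s)^{m} \leq 2^{m}$ on the growing term. The main obstacle is this Case~B analysis: the true maximum of $r$ lies at $2^{m} + 1 + m/2^{m}(1+o(1))$, only a factor of four below the claimed upper bound, so the cancellation between the drop of $(1+\cos\theta)^{m}$ below $2^{m}$ and the growth of $(1+\sin\theta)^{m}$ above $1$ must be tracked tightly enough to produce the perfect square, while remaining loose enough that the sub-leading corrections can be absorbed.
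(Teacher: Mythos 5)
Your lower bound is correct and complete: evaluating at $\sin\theta_0=2^{-m}$ and applying Bernoulli's inequality twice yields exactly $2^m+1+m/2^{m+1}$, in the same spirit as the paper (which instead sets $\sin(\theta/2)=2^{-m}$). Your Case~A of the upper bound is also sound. For the upper bound the paper takes a different route: it shows $r'(\theta)\le 0$ once $\sin\theta=2^{-(m-2)}$ (for $m\ge 5$), so the maximizer $\theta_1\in(0,\pi/4)$ satisfies $\theta_1\le 2^{-(m-2)}$, and then uses the first-order bound $r(\theta_1)\le r(0)+\theta_1 r'(0)=2^m+1+m\theta_1$. Your direct case split on $s=\sin\theta$ is a legitimate alternative in principle, but Case~B does not close as written.

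Two concrete failures. First, in the sub-range $ms>1$ you claim the crude bound $(1+s)^m\le 2^m$ plus the decay of $2^m(1-s^2/4)^m$ suffices; it cannot, since you would then need $2^m(1-s^2/4)^m\le 1+4m/2^m$, i.e.\ $1-s^2/4\le\tfrac12(1+4m/2^m)^{1/m}$, i.e.\ $s\ge\sqrt{2}$, which is impossible for $s=\sin\theta\le 1$. (Numerically, $m=10$, $s=0.2$: the first term is about $926$ and your bound on the second is $1024$, summing to about $1950$ against a target of $1025$.) You need something like $(1+s)^m\le e^{ms}$ together with a quantitative lower bound on $2^m-2^m(1-s^2/4)^m$ there. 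Second, in the main sub-range $ms\le 1$ the required inequality after your expansions is $ms+m^2s^2+2^m\binom{m}{2}s^4/16\le 2^{m-2}ms^2+4m/2^m$, and your perfect square $2^{m-2}m(s-4/2^m)^2\ge 0$ disposes only of $2ms$ against the right-hand side. The extra $ms$ of slack must then absorb both $m^2s^2$, which equals $ms$ exactly when $ms=1$ (leaving nothing), and the quartic term, which is of order $m^2/8^m$ only when $s\sim 2^{-m}$; at $s\sim 1/m$ it is of order $2^m/m^2$. The two tight spots occur at different values of $s$ ($s=4/2^m$ for the vanishing of the square, $s\sim 1/m$ for the exhaustion of the slack), so a finer allocation of the $2^{m-2}ms^2$ term would likely rescue the argument for $m$ large enough, but the step ``the residual fits within the slack'' is not justified as stated, and small $m$ would in any case require separate verification.
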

\begin{proof}
We prove the lower bound first. When $m=1$, let $\theta=\frac{\pi}{4}$, then
$r(\theta) = 2+ \sqrt{2} \geq 2^{m} + 1 + \frac{m}{2^{m+1}}.$

When $m\geq2$, let $\sin \frac{\theta}{2}=\frac{1}{2^{m}}$, then
\begin{align}
r(\theta)&=(1+\cos\theta)^{m} + (1+\sin \theta)^{m}\\
& =  2^{m}(1-\sin^2\frac{\theta}{2})^{m} + (1+\sin\theta)^{m}\\
& \geq  2^{m} (1-m\sin^2\frac{\theta}{2}) + 1+m\sin\theta\\
& =  2^{m}(1-\frac{m}{4^{m}})+1+2m\frac{1}{2^{m}}\sqrt{1-\frac{1}{4^{m}}}\\
& = 2^{m}+1+\frac{m}{2^{m+1}} (4\sqrt{1-\frac{1}{4^{m}}}-2)\\
& \geq 2^{m}+1+\frac{m}{2^{m+1}}.
\end{align}

As for the upper bound, it is easy to observe that $\max_{\theta}r(\theta)=r(\theta_1)\le r(0)+\theta_1\cdot r'(0)$ with $\theta_1 \in (0,\frac{\pi}{4})$. Note that
\begin{align}
r'(\theta)
&=-m\sin\theta(1+\cos\theta)^{m-1}+m\cos\theta(1+\sin\theta)^{m-1}\\
& = m\cos\theta(1+\sin\theta)^{m-1} (1-\frac{\sin \theta}{\cos \theta} (\frac{1+\cos\theta}{1+\sin \theta})^{m-1})\\
& \leq m\cos\theta(1+\sin\theta)^{m-1} (1-\sin \theta (2- \frac{1-\cos\theta+2\sin\theta}{1+\sin\theta})^{m-1})\\
& \leq m\cos\theta(1+\sin\theta)^{m-1} (1-\sin\theta\cdot 2^{m-1} (1-(m-1)\frac{\sin\theta+\sin^2\frac{\theta}{2}}{1+\sin\theta})).
\end{align}
For $\sin\theta = \frac{1}{2^{m-2}}$ and $m\geq 5$, we have
\begin{align}
  r'(\theta) &\leq m\cos\theta(1+\sin\theta)^{m-1} (1-2(1-(m-1)\frac{\sin\theta+\sin^2\frac{\theta}{2}}{2}))\\
  & = m\cos\theta(1+\sin\theta)^{m-1} (-1+(m-1)(\sin\theta+\sin^2\frac{\theta}{2}))\\
  & < m\cos\theta(1+\sin\theta)^{m-1}  (-1 + (m-1)\frac{2}{2^{m-2}})\\
  & \leq 0.
\end{align}
Thus $\theta_1 \leq \frac{1}{2^{m-2}}$ and $\max_\theta r(\theta) \leq 2^m + 1 + \frac{8m}{2^{m+1}}$.
\end{proof}

\end{document}